\newtheorem{theorem}{Theorem}
\newtheorem{proposition}{Proposition}
\DeclareMathAlphabet{\mathbit}{OML}{cmr}{bx}{it}
\DeclareMathAlphabet{\mathsf}{OT1}{cmss}{m}{n}
\DeclareMathAlphabet{\mathTXf}{OT1}{cmss}{bx}{it}
\DeclareMathOperator*{\argmin}{argmin}
\DeclareMathOperator{\DoF}{DoF}
\DeclareMathOperator{\vect}{vect}
\DeclareMathOperator{\DCSI}{DCSI}
\DeclareMathOperator{\CN}{\mathcal{N}_{\mathbb{C}}}
\newcommand{\bA}{\mathbf{A}}
\newcommand{\bG}{\mathbf{G}} 
\newcommand{\bH}{\mathbf{H}}
\newcommand{\bN}{\mathbf{N}}
\newcommand{\bR}{\mathbf{R}} 
\newcommand{\bT}{\mathbf{T}} 
\newcommand{\bU}{\mathbf{U}} 
\newcommand{\bV}{\mathbf{V}} 
\newcommand{\bW}{\mathbf{W}} 
\newcommand{\bY}{\mathbf{Y}}
\newcommand{\LB}{\left(}
\newcommand{\RB}{\right)}
\newcommand{\LSB}{\left[}
\newcommand{\RSB}{\right]}
\newcommand*{\dotleq}{\mathrel{\dot{\leq}}}
\newcommand*{\dotgeq}{\mathrel{\dot{\geq}}}
\newcommand{\e}{{\operatorname{e}}}
\newcommand{\I}{\mathbf{I}} 
\newcommand{\Fro}{{\text{F}}}
\newcommand{\E}{{\mathrm{E}}}
\newcommand{\tot}{{\mathrm{tot}}}
\newcommand{\He}{{{\mathrm{H}}}}
\newcommand{\EVD}{{\mathrm{EVD}}}  
\theoremstyle{remark}
\newtheorem{remark}{Remark} 
\begin{document} 

\title{Degrees of Freedom of Certain Interference Alignment Schemes with Distributed CSIT}
\author{\IEEEauthorblockN{Paul de Kerret\IEEEauthorrefmark{1}, Jakob Hoydis\IEEEauthorrefmark{4}, and David Gesbert\IEEEauthorrefmark{1}}

\IEEEauthorblockA{\IEEEauthorrefmark{1}
Eurecom, Campus SophiaTech, 450 Route des Chappes, 06410 Biot, France {\tt{\{dekerret,gesbert\}@eurecom.fr}}}

\IEEEauthorblockA{\IEEEauthorrefmark{4}
Bell Laboratories, Alcatel-Lucent, Lorenzstr. 10, 70435
Stuttgart, Germany, {\tt{jakob.hoydis@alcatel-lucent.com}}}}

%
\author{\IEEEauthorblockN{Paul de Kerret\IEEEauthorrefmark{1}, Maxime Guillaud\IEEEauthorrefmark{3}, and David Gesbert\IEEEauthorrefmark{1}}

\IEEEauthorblockA{\IEEEauthorrefmark{1}
Eurecom, Campus SophiaTech, 450 Route des Chappes, 06410 Biot, France\\ {\normalsize \tt{\{\href{mailto:dekerret@eurecom.fr}{dekerret},\href{mailto:gesbert@eurecom.fr}{gesbert}\}@eurecom.fr}}}

\IEEEauthorblockA{\IEEEauthorrefmark{3}
Institute of Telecommunications - Vienna University of Technology\\ Gu{\ss}hausstra{\ss}e 25 / E389, A-1040 Vienna, Austria\\ \href{mailto:guillaud@tuwien.ac.at}{\normalsize \tt{guillaud@tuwien.ac.at}}}}

\maketitle

\begin{abstract}
In this work, we consider the use of interference alignment (IA) in a MIMO interference channel (IC) under the assumption that each transmitter (TX) has access to channel state information (CSI) that generally differs from that available to other TXs. This setting is referred to as \emph{distributed CSIT}. In a setting where CSI accuracy is controlled by a set of power exponents, we show that in the static $3$-user MIMO square IC, the number of degrees-of-freedom (DoF) that can be achieved with distributed CSIT is at least equal to the DoF achieved with the worst accuracy taken across the TXs \emph{and} across the interfering links. We conjecture further that this represents exactly the DoF achieved. This result is in strong contrast with the \emph{centralized CSIT} configuration usually studied (where all the TXs share the same, possibly imperfect, channel estimate) for which it was shown that the DoF achieved at receiver (RX)~$i$ is solely limited by the quality of its \emph{own} feedback. This shows the critical impact of CSI discrepancies between the TXs, and highlights the price paid by distributed precoding. 
\end{abstract}
\IEEEpeerreviewmaketitle

\let\thefootnote\relax\footnotetext{We acknowledge the support of the Newcom\# Network of Excellence in Wireless Communications, under the 7th Framework Program of the European Commission (EC), as well as of the Franco-Austrian EGIDE-\"OAD ``Amadeus'' Programme, under grant \#FR05/2012. M. Guillaud was also supported by the FP7 HIATUS project of the EC and by the Austrian Science Fund (FWF) through grant NFN SISE (S106). David Gesbert and Paul de Kerret acknowledge support from the Celtic European project SHARING.

Part of this work will be presented at the IEEE conference SPAWC, Darmstadt, June 2013.} 

\section{Introduction} 

It has recently been shown that an improvement in the DoF achieved over certain multi-user channels could be obtained by designing the transmission scheme such that interference aligns at the RXs \cite{MaddahAli2008,Cadambe2008,Gou2010}. The first IA scheme was based on the coding of the user's data symbols across multiple orthogonal dimensions (called \emph{symbol extension}) to align the interference over half the dimensions, thus leaving half the dimensions free of interference \cite{Cadambe2008}. IA has then be applied to MIMO ICs without symbol extension and has become the center of a strong interest. A large number of iterative IA algorithms have then been provided for that setting (see \cite{Gomadam2008,Schmidt2009,Peters2011}, among others).

One of the main obstacles for the practical use of IA comes from the need to gather the CSI relative to the global multi-user channel. Indeed, the resources available for feedback are very limited and make the obtaining of the multi-user CSI at the TX (CSIT) in a timely manner especially challenging~\cite{Sesia2011}. 

Consequently, the study of how CSIT requirements for IA methods can somehow be alleviated has become an active research topic in its own right~\cite{Gomadam2008,Thukral2009,Krishnamachari2010,Schmidt2009,Suh2011,ElAyach2012}. Another line of work consists in studying the minimal number of CSI quantization bits that should be conveyed to the TXs to achieve some given DoF using IA \cite{Thukral2009,Krishnamachari2010,Rezaee2013a}. It should be noted that in all these works, every one of the TXs is assumed to be provided with the \emph{same} quantized CSIT, meaning that the imperfect estimates are perfectly shared between the TXs, which we call the \emph{centralized} CSIT configuration, since this setting is equivalently obtained when all the precoders are computed centrally and then shared to the TXs.

Since the interfering TXs in an IC are usually not colocated, this assumption is likely to be breached. Indeed, each TX is likely to receive its channel estimate via a different feedback channel. For example, if the CSIT is obtained via an analog feedback broadcast from the RXs, as in \cite{ElAyach2012}, each TX will receive a different estimation of the multi-user channel with a priori different accuracies. An alternative possibility, currently envisioned for future LTE systems, consists in letting each RX feedbacks its CSI to its serving TX which then forwards it to the other TXs \cite{Sesia2011,Rezaee2013b}. In that setting as well, the sharing step leads in most cases to a CSIT aging, or requires further quantization. In both scenarios, each TX receives its \emph{own} estimate of the multi-user channel based on which it computes its precoder without additional communications with the other TXs. This case has been first denoted in \cite{Zakhour2010a,dekerret2011_ISIT_journal} as the \emph{distributed CSIT} configuration. 

The distributed CSIT scenario has recently gained in interest with the developement of TX cooperation in wireless networks. In \cite{Aggarwal2011}, the IC is studied when each TX has a \emph{local view} of the IC. Specifically, each TX has a perfect knowledge of the channel coefficients inside a given neighborhood and no knowledge of the other coefficients. In \cite{Aggarwal2012}, the same concept of local view is discussed this time with rounds of message passing. Extending the model of channel with state from \cite{Gelfand1980,Costa1983}, the transmission in multiple-access channels (MAC) has also been studied when each TX has access to a different CSIT \cite{Lapidoth2010,Como2011}. Going back to IA, the scenario where the TXs have only an \emph{incomplete} knowledge of the multi-user channel in the sense that the TXs do not have the knowledge of all the channel coefficients, is studied in \cite{dekerret2012_ISWCS_journal,dekerret2013_WCM}. It is shown that the IA algorithm
 s can be modified to achieve IA in some cases using only incomplete CSIT.

Nevertheless, the feedback/quantization requirements for IA with distributed CSIT have never been studied. Thus, we investigate here how the works \cite{Thukral2009,Krishnamachari2010} deadling with IA in the centralized CSIT configuration extend to the distributed CSIT case. Specifically, our main contributions are as follows:
\begin{itemize}
\item In a general MIMO IC, we provide a sufficient criterion on the accuracy of the precoder design to achieve the maximum DoF.
\item Studying the particular $3$-User MIMO square setting, we provide a closed-form expression for the achieveable DoF. It is shown to depend on the worst accuracy across the TXs and the channel elements.  
\end{itemize}

\paragraph*{Notations} We write~$x\doteq y$ to represent the exponential equality in the SNR~$P$, i.e., $\lim_{P\rightarrow \infty} \log_2(x)/\log_2(P)=\lim_{P\rightarrow \infty}\log_2(y)/\log_2(P)$. The inequalities $\dotleq$ and $\dotgeq$ are defined similarly. $\mathcal{N}(0,1)$ is used to represent the complex circularly symmetric zero-mean unit-variance Gaussian distribution. $\lambda_i(\mathbf{A})$ denotes the $i$th eigenvalue (orderered by decreasing absolute value) of the diagonalizable matrix~$\mathbf{A}$ while $\lambda_{\min}(\mathbf{A})$ denotes the eigenvalue with the smallest absolute value. $\EVD(\bA)$ denotes the eigenbasis of the diagonalizable matrix~$\bA$. $\vect(\bA)$ is the vector made of the stacked columns of the matrix~$\bA$. $\E_{\mathcal{A}}[\cdot]$ denotes the expectation over the subspace~$\mathcal{A}$ and $\Pr(\mathcal{A})$ the probability of the subspace~$\mathcal{A}$.

\section{System Model}\label{se:SM}

\subsection{MIMO interference channel}\label{se:SM:DCSI}

We consider a conventional static MIMO IC with $K$~users \cite{Peters2011} and assume that each TX has its \emph{own} CSI in the form of an imperfect estimate of the whole multi-user channel state. TX~$j$ is equipped with $M_j$~antennas and RX~$i$ with $N_i$~antennas. The antenna configuration is supposed to be tightly-feasible in the sense that the number of antennas available is the minimal one which allows to achieve the DoF desired at every user \cite{dekerret2012_ISWCS_journal}. The channel from TX~$j$ to RX~$i$ is represented by the channel matrix~$\mathbf{H}_{i,j}\in \mathbb{C}^{N_i\times M_j}$ with its elements distributed according to a continuous distribution which ensures that all the sub-matrices are almost surely full rank. We denote by~$\mathcal{H}$ the space of all possible channel realizations. 
Since interference alignment is invariant by scaling (by a non-zero complex scalar) of the channel matrices, we further define $\tilde{\bH}_{i,j}\triangleq \e^{\jmath\phi_{i,k}} \frac{\mathbf{H}_{i,j}}{\|\mathbf{H}_{i,j}\|_{\Fro}}$,
where~$\phi_{i,k}\in\mathbb{R}$ is chosen so as to let the first element of~$\vect(\tilde{\bH}_{i,k})$ be real valued.

The global multi-user channel matrix is denoted by $\mathbf{H}\in \mathbb{C}^{N_{\tot}\times M_{\tot}}$ with $N_{\tot}\triangleq \sum_{i=1}^ KN_i$ and $M_{\tot}\triangleq \sum_{i=1}^ KN_i$, and defined as
\begin{equation}
\mathbf{H}\triangleq\begin{bmatrix}
\mathbf{H}_{1,1}&\mathbf{H}_{1,2}&\ldots&\mathbf{H}_{1,K}\\
\mathbf{H}_{2,1}&\mathbf{H}_{2,2}&\ldots&\mathbf{H}_{2,K}\\
\vdots&\vdots&\ddots&\vdots\\
\mathbf{H}_{K,1}&\mathbf{H}_{K,2}&\ldots&\mathbf{H}_{K,K}
\end{bmatrix}.
\label{eq:SM_1}
\end{equation}
The matrix~$\tilde{\bH}$ is defined similarly from the matrices~$\tilde{\bH}_{i,k}$. 

Assume that TX~$j$ uses the precoder~$\mathbf{T}_j\triangleq \sqrt{P} \bU_j\in\mathbb{C}^{M_j\times d_j}$ with $\|\bU_j\|^2_{\Fro}=1$ to transmit the data symbol $\bm{s}_j\in \mathbb{C}^{d_j}$ (i.i.d. $\mathcal{N}(0,1)$) to RX~$j$. Hence, the precoder fulfills the per-TX power~$\|\bT_j\|_{\Fro}^2=P$.

The received signal $\bm{y}_i\in \mathbb{C}^{N_i}$ at RX~$i$ is
\begin{equation}
\bm{y}_i=\sqrt{P}\mathbf{H}_{i,i}\bU_i \bm{s}_i+\sqrt{P}\sum_{j=1,j\neq i}^K \mathbf{H}_{i,j}\bU_j  \bm{s}_j+\bm{\eta}_i
\label{eq:SM_2}
\end{equation}
where $\bm{\eta}_i\in \mathbb{C}^{N_i}$ is the noise at RX~$i$ and has its elements i.i.d. $\CN(0,1)$. The received signal $\bm{y}_i$ is then processed by a RX filter $\mathbf{G}_i^{\He}\in \mathbb{C}^{d_i\times N_i}$ with $\|\bG_i\|^2_{\Fro}=1$.

The average rate achieved at user~$i$ is written as
\begin{equation}
R_i=\E_{\mathcal{H},\mathcal{W}}\left[\log_2\left|\I_{d_i}+P\bar{\bR}_i^{-1}\bG_i^{\He}\bH_{i,i}\bU_i\bU_i^{\He}\bH_{i,i}^{\He}\bG_i\right|\right]
\label{eq:SM_3}
\end{equation}
where
\begin{equation}
\bar{\bR}_i= \I_{d_i}+ P \sum_{\ell=1,\ell\neq i}^K\bG_i^{\He}\bH_{i,\ell}\bU_{\ell}\bU_{\ell}^{\He}\bH_{i,\ell}^{\He}\bG_i
\label{eq:SM_4}
\end{equation}
and $\E_{\mathcal{H},\mathcal{W}}[\cdot]$ denotes the expectation over the channel matrices and the channel estimation errors according to the feedback model described in Subsection~\ref{se:SM:DCSI}.
The DoF at user~$i$, or prelog factor, is then defined as
\begin{equation}
\DoF_i=\lim_{P\rightarrow \infty} \frac{R_i}{\log_2(P)}.
\label{eq:SM_5}
\end{equation}

\subsection{Distributed CSIT and distributed precoding}\label{se:SM:DCSI}
Let us assume that TX~$j$ receives its own estimate of the channel from TX~$k$ to RX~$i$. We denote  this estimate by~$\tilde{\bH}_{i,k}^{(j)}$, assumed to have the same properties (unit norm and real-valued first coefficient) as $\tilde{\mathbf{H}}_{i,k}$. Furthermore, similar to \eqref{eq:SM_1}, we let $\tilde{\bH}^{(j)}$ denote the channel state information available at TX~$j$.
In the sequel, we assume that each TX independently computes its own solution of the IA problem based on its own CSI. 
Specifically, TX~$j$ computes the solution (in terms of the precoders and receive filters $\bU_k^{(j)}, k=1\ldots K$ and $\bG_i^{(j)}, i=1\ldots K$) of its own IA problem based on $\tilde{\bH}^{(j)}$,
\begin{equation}
 (\bG_i^{(j)})^{\He} \tilde{\bH}_{i,k}^{(j)} \bU_k^{(j)} = {\bf 0}_{d_i\times d_j} \quad \forall k\neq i
\end{equation}
where~$\bU_k^{(j)}$ is the precoder designed to be used by TX~$k$ and $\bG_i^{(j)}$ is the receive filter assumed at RX~$i$.
However, since the TXs are not colocated and do not exchange further informations, $\bU_j^{(j)}$ is used for the actual transmission at TX~$j$, while the $\bU_i^{(j)}, i\neq j$ are discarded, such that considering the whole network we have
\begin{equation} 
\bU_j=\bU_j^{(j)},\qquad \forall j.
\end{equation}

This distributed CSIT setting is depicted and compared to the centralized CSIT configuration in Fig.~\ref{distributed-centralized}.

\subsection{Imperfect CSI model}

\begin{figure}[ht!]
\centering
\subfloat[][]{
\label{centralized}
\includegraphics[width=0.4\columnwidth]{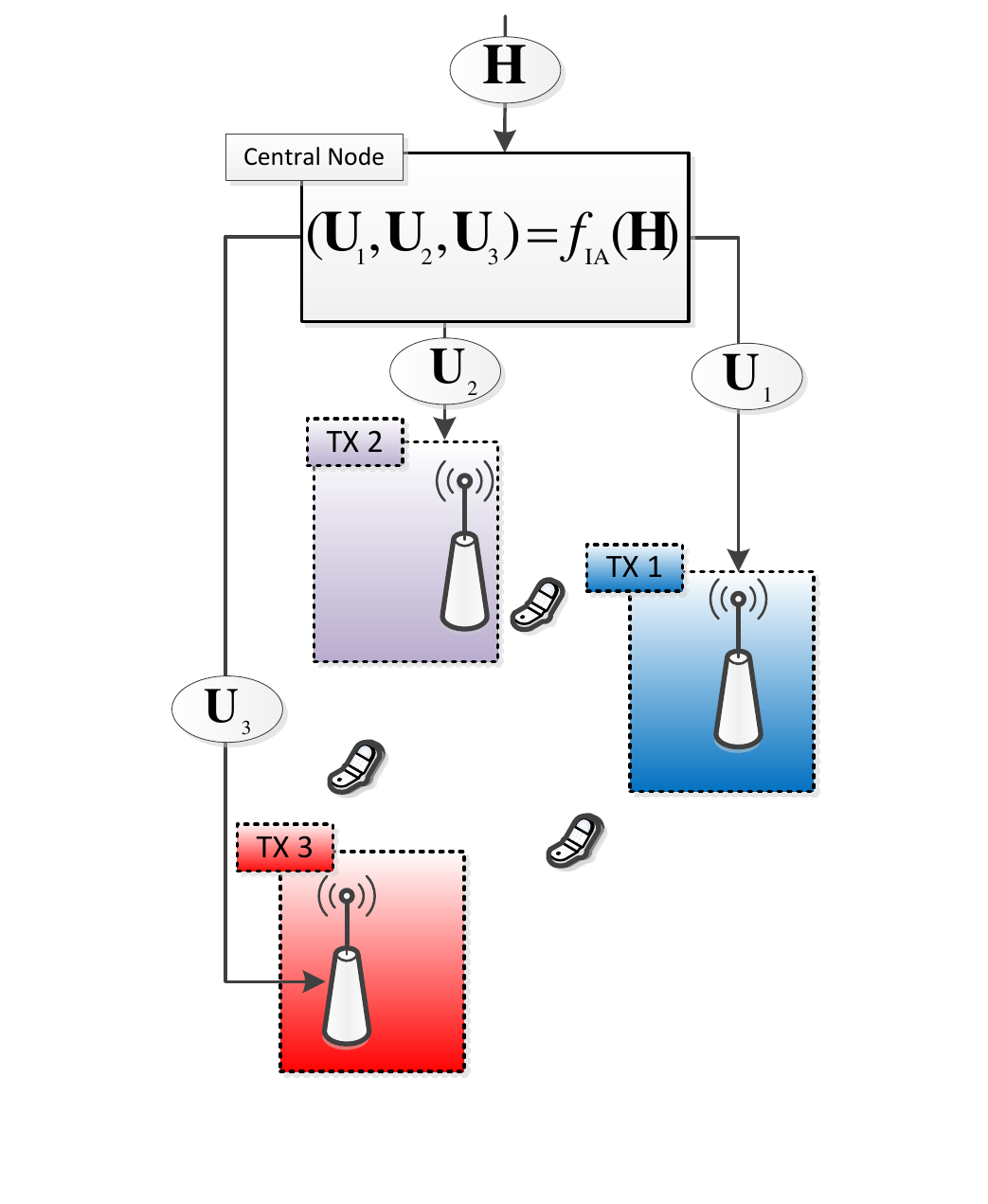}}
\hspace{8pt}  
\subfloat[][]{\label{distributed}
\includegraphics[width=0.5\columnwidth]{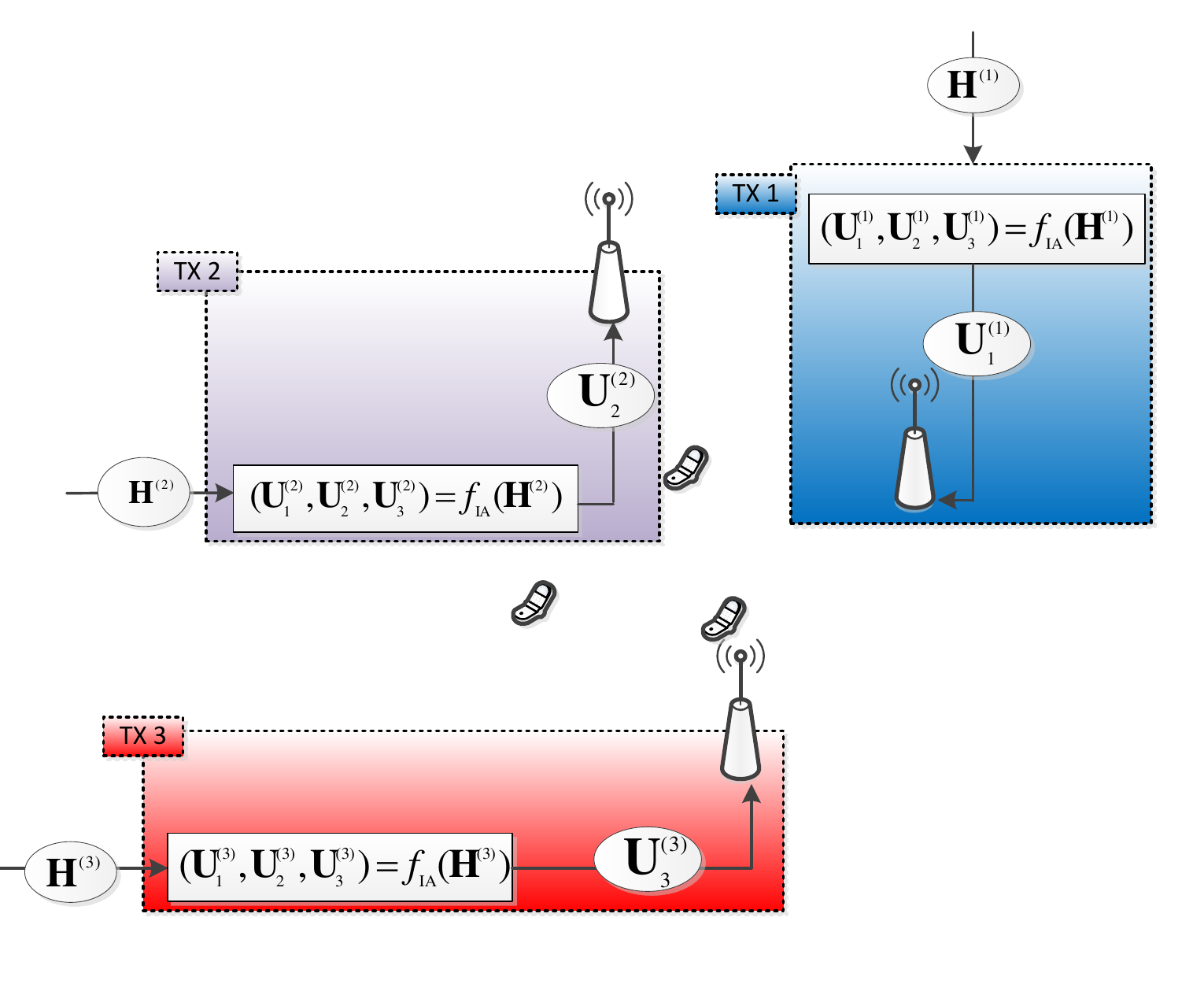}}
\caption[]{IA precoding with centralized precoding/CSIT is symbolically represented in Figure~\subref{centralized} while Figure~\subref{distributed} represents IA with distributed precoding/CSIT.} 
\label{distributed-centralized}%
\end{figure}

Let us assume that $\tilde{\bH}_{i,k}^{(j)}$ results from the quantization of $\tilde{\mathbf{H}}_{i,k}$, using a quantization scheme using $B_{i,k}^{(j)}$ bits according to
\begin{equation}
\tilde{\bH}_{i,k}^{(j)}=\argmin_{\vect(\bW)\in\mathcal{W}_{i,k}^{(j)}}\left\|\tilde{\mathbf{H}}_{i,k} -\bW\right\|_{\Fro},\qquad \forall k,i,j,
\label{eq:SM_5}
\end{equation}
where $\mathcal{W}_{i,k}^{(j)}$ contains~$2^{B_{i,k}^{(j)}}$ vectors of size~$\mathbb{C}^{N_i M_k}$ isotropically distributed over the unit-sphere and rotated to have their first element real-valued.
We further define 
\begin{align}
(\sigma_{i,k}^{(j)})^2&\triangleq \E_{\mathcal{H},\mathcal{W}}\left[\left\|\tilde{\bH}_{i,k}^{(j)}-\tilde{\bH}_{i,k}\right\|_{\Fro}^2\right] \quad \mathrm{and}\\
\bN_{i,k}^{(j)}&\triangleq  \frac{\tilde{\bH}_{i,k}^{(j)}-\tilde{\bH}_{i,k}}{\sigma_{i,k}^{(j)}},
\label{eq:SM_6}
\end{align}
where~$\E_{\mathcal{W}}[\cdot]$ denotes the expectation over the random codebooks. It then gives
\begin{equation}
\tilde{\bH}_{i,k}^{(j)}=\tilde{\bH}_{i,k}+\sigma_{i,k}^{(j)}\bN_{i,k}^{(j)}.
\label{eq:SM_7}
\end{equation}
Since there is no confusion possible we use the short notation $\E_{\mathcal{H}}[\cdot]$ instead of $\E_{\mathcal{H},\mathcal{W}}[\cdot]$.

Due to the adopted normalization, the quantization scheme \eqref{eq:SM_5} corresponds to the Grassmannian quantization over the Grassmannian space, similar to that used in \cite{Jindal2006,Krishnamachari2010}.
Using this property and the results from \cite{Dai2008,dekerret2011_ISIT_journal}, the variance of the estimation error can be related to the number of quantization bits as follows.
\begin{proposition}[{\cite[Theorem~$2$]{Dai2008}}]
When the size $L_{i,k}^{(j)}=2^{B_{i,k}^{(j)}}$ of the random codebook is sufficiently large, it then holds that
\begin{equation} 
(\sigma_{i,k}^{(j)})^2= C_{i,k}^{(j)}2^{-B_{i,k}^{(j)}/(N_i M_k-1)}
\label{eq:SM_8}
\end{equation} 
for some constant~$C_{i,k}^{(j)}>0$.
\label{App_distorsion}
\end{proposition}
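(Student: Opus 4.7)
The plan is to reduce the statement to a standard Grassmannian-line quantization problem on $\mathbb{C}^n$ with $n=N_iM_k$. Observe that $\tilde{\bH}_{i,k}$, after the unit-norm scaling and the rotation enforcing a real-valued first coefficient of $\vect(\tilde{\bH}_{i,k})$, represents a point on the complex projective space $G_{1,n}(\mathbb{C})$ (the Grassmannian of lines in $\mathbb{C}^n$), whose real dimension is $2(n-1)$. Because the continuous distribution of $\bH_{i,k}$ induces a distribution on that Grassmannian that is absolutely continuous with respect to the unitarily invariant measure (in fact, for isotropic entries, exactly the invariant one), the codebook $\mathcal{W}_{i,k}^{(j)}$ of $L=2^{B_{i,k}^{(j)}}$ isotropic unit vectors on the sphere is also isotropic on $G_{1,n}(\mathbb{C})$ after the same phase alignment.

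The first technical step is to express the distortion in a form amenable to Grassmannian analysis. Since both $\tilde{\bH}_{i,k}$ and each codeword $\bW$ are unit-norm and phase-aligned,
\begin{equation}
\|\tilde{\bH}_{i,k}-\bW\|_{\Fro}^2 = 2-2\Real\langle \bW,\tilde{\bH}_{i,k}\rangle = 2-2|\langle \bW,\tilde{\bH}_{i,k}\rangle|,
\end{equation}
where the last equality follows from the phase normalization. Writing $u=|\langle \bW,\tilde{\bH}_{i,k}\rangle|^2$ and $s=1-u$ (the squared sine of the principal angle), the distortion equals $2-2\sqrt{1-s}=s+O(s^2)$ as $s\to 0$, so up to a relative error that becomes negligible in the high-rate regime $L\to\infty$ we may replace the Frobenius distortion with $s$.

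Next I would compute, for one isotropic codeword, the distribution of $u$: standard results give $u\sim\mathrm{Beta}(1,n-1)$, hence $\Pr(s\le t)=t^{n-1}$ for $t\in[0,1]$. By independence of the $L$ codewords, the minimum squared sine $S^\star\triangleq\min_{\ell=1,\dots,L} s_\ell$ satisfies $\Pr(S^\star\le t)=1-(1-t^{n-1})^L$, and therefore
\begin{equation}
\E[S^\star]=\int_0^1(1-t^{n-1})^L\,\diffd t=\frac{1}{n-1}\,B\!\left(\frac{1}{n-1},L+1\right).
\end{equation}
Applying Stirling to $\Gamma(L+1)/\Gamma(L+1+1/(n-1))\sim L^{-1/(n-1)}$ yields, for $L$ sufficiently large,
\begin{equation}
\E[S^\star] = \frac{\Gamma\!\left(1+\tfrac{1}{n-1}\right)}{L^{1/(n-1)}}\bigl(1+o(1)\bigr).
\end{equation}

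The last step is to combine the two ingredients: the relation $\|\tilde{\bH}_{i,k}-\bW\|_{\Fro}^2=s+O(s^2)$ together with the fact that the minimizing $s$ concentrates on values of order $L^{-1/(n-1)}\to 0$ guarantees that the $O(s^2)$ correction is a lower-order term. Averaging over $\mathcal{H}$ absorbs any $\bH$-dependent prefactors (finite since the channel distribution has bounded density on the Grassmannian) into a positive constant $C_{i,k}^{(j)}$, giving the claimed $(\sigma_{i,k}^{(j)})^2=C_{i,k}^{(j)}\,2^{-B_{i,k}^{(j)}/(N_iM_k-1)}$. I expect the main obstacle to be the rigorous control of the higher-order term $O(s^2)$ in the distortion expansion uniformly over the channel realizations, so that the asymptotic behavior of $\E[S^\star]$ survives averaging; this is where one invokes the absolute continuity of the induced distribution on $G_{1,n}(\mathbb{C})$ and a dominated-convergence argument.
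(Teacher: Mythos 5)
First, note that the paper itself offers no proof of this proposition: it is imported verbatim as Theorem~2 of the cited reference of Dai, Liu and Rider, whose argument is precisely the random-codebook distortion analysis on the Grassmannian of lines $G_{1,n}(\mathbb{C})$ (with $n=N_iM_k$) that you sketch. Your skeleton is therefore the right one, and its quantitative core is correct: for a single isotropic codeword the squared correlation $u=|\langle\bW,\tilde{\bH}_{i,k}\rangle|^2$ is indeed $\mathrm{Beta}(1,n-1)$, so $\Pr(s\le t)=t^{n-1}$; the order statistics of $L$ independent codewords give $\E[S^\star]=\tfrac{1}{n-1}B\bigl(\tfrac{1}{n-1},L+1\bigr)\sim\Gamma\bigl(1+\tfrac{1}{n-1}\bigr)L^{-1/(n-1)}$; and $L^{-1/(n-1)}=2^{-B_{i,k}^{(j)}/(N_iM_k-1)}$ is the claimed exponent.

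The genuine gap is the identity $2-2\Real\langle\bW,\tilde{\bH}_{i,k}\rangle=2-2|\langle\bW,\tilde{\bH}_{i,k}\rangle|$, which you attribute to the phase normalization. Fixing the phase of each vector so that its \emph{own} first entry is real does not make the \emph{mutual} inner product of two distinct vectors real, let alone real and nonnegative: for $i\ge 2$ the terms $\overline{w_i}h_i$ contribute an imaginary part of the same order as $\|\bW-\tilde{\bH}_{i,k}\|_{\Fro}$. Consequently the quantizer of \eqref{eq:SM_5}, which minimizes the phase-aligned Frobenius distance, is \emph{not} the chordal-distance quantizer, and the quantity you analyze, $\min_\ell s_\ell$, is only a lower bound on the distortion defined in \eqref{eq:SM_6}. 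The repair is not deep but must be made explicit: the set of phase-normalized unit vectors is a $2(n-1)$-real-dimensional section of the sphere on which the Frobenius metric and the chordal metric are locally equivalent away from the measure-zero set where the first entry vanishes (where the section degenerates), so the small-ball probability of the Frobenius distortion still scales as $t^{n-1}$ and the exponent $1/(n-1)$ survives; alternatively, one redoes the order-statistics computation directly for $\min_\ell\|\tilde{\bH}_{i,k}-\bW_\ell\|_{\Fro}^2$. Either way only the constant changes --- your $\Gamma\bigl(1+\tfrac{1}{n-1}\bigr)$ is the constant for chordal distortion, not for the distortion actually defined --- which is harmless since the proposition asserts only ``some constant $C_{i,k}^{(j)}>0$,'' but the metric-equivalence step, together with the integrability over $\mathcal{H}$ of the resulting factors that depend on the first channel coefficient, is exactly where your proof currently has a hole.
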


 
With centralized CSIT, is a well known result \cite{Caire2010,Thukral2009} that the number of quantization bits should scale with the SNR in order to achieve a positive DoF. Hence, we define the \emph{CSIT scaling coefficients} $A_{i,k}^{(j)}$ as 
\begin{equation}
 \quad A_{i,k}^{(j)}\triangleq \lim_{P\rightarrow \infty}\frac{B_{i,k}^{(j)}}{B_{i,k}^\star},\forall k,i,
\label{eq:ST_3}
\end{equation}
where we have defined
\begin{equation}
B_{i,k}^ \star\triangleq (N_i M_k-1)\log_2(P).
\end{equation}
The pre-log coefficient~$N_i M_k-1$ corresponds to the number of channel coefficients to feedback ater normalization of the channel matrix. $B_{i,k}^\star$ is the number of bits which corresponds to a quantization error decreasing as $P^{-1}$, which is essentially perfect in terms of DoF \cite{Jindal2006,dekerret2011_ISIT_journal}. Hence, $A_{i,k}^{(j)}$ can be seen as the fraction of the feedback requirements to achieve the maximal DoF.

\begin{remark} We consider here a codebook-based quantization of the channel vectors but the results can be easily translated to a setting where analog feedback is used \cite{Caire2010,ElAyach2012} by making the quantization error a function of the SNR. In fact, the digital quantization used in this work is simply a model for the errors in the channel estimates resulting from the limited feedback. Furthermore, only CSIT requirements are investigated, and different scenarios can be envisaged regarding the method used to fulfill these requirements (e.g., direct broadcasting from the RXs to all the TXs, sharing through a backhaul, \dots) \cite{Sesia2011,Rezaee2013b}. 
 \qed
\end{remark}

\section{DoF Analysis with Static Coefficients and Distributed CSI}\label{se:ST}

Let us now focus on the situation where every TX designs its precoder based on a different multi-user channel estimate. Hence, the precoding matrices used for the transmission do not form exactly an IA solution for any  imperfect estimate of the multi-user channel. This is in contrast to the centralized case studied in  \cite{Thukral2009,Krishnamachari2010}. Hence, the analysis done in these works does not hold in the setting considered here and a new approach is required.

The analysis of this situation is complicated by the fact that the function that gives the precoders as a function of the channel coefficients can not be assumed to be continuous. This can be seen by considering that there are in general multiple solutions to the IA equations \cite{Gonzalez2013}, while iterative algorithms, such as the iterative leakage minimization from \cite{Gomadam2008}, converge to one of the IA solutions. So far this convergence is not fully understood, and it can not be ruled out that a small change in the CSI (as in the case in the distributed CSI considered here) leads to a convergence to completely different solutions across the users.

Furthermore, the channel estimates at the different TXs are potentially of different accuracies such that it is not clear which accuracy dictates the DoF. Answering this question is the main goal of this work.

\subsection{Sufficient condition for an arbitrary IA scheme}
Let us denote by~$\bU_i^{\star}$ and~$\bG_i^{\star}$ the precoder and the RX filter at TX~$i$ and RX~$i$, respectively, when perfect CSIT is available at the TXs for \emph{an arbitrary IA scheme}, i.e., verifying~$(\bG_i^{\star})^{\He}\bH_{ij}\bU^{\star}_j=\bm{0}_{d_i\times d_j},\forall i \neq j$. We further define 
\begin{equation}
\bm{\Delta}\bU_{i}^{(j)}\triangleq \bU_i^{(j)}-\bU_i^{\star},\qquad \forall i,j.
\end{equation}
We now characterize the DoF achieved as a function of the precoder accuracy.
\begin{proposition}
In the IC with distributed CSIT as described in Section~\ref{se:SM}, if the CSIT is such that
\begin{equation}
\E_{\mathcal{H}}[\|\bm{\Delta}\bU_{j}^{(j)}\|_{\Fro}^2]\doteq P^{-\beta_j},\qquad \forall j, 
\end{equation}
with $\beta_j\in [0,1]$, then
\begin{equation}
\DoF_{i}\geq d_i \min_{j\neq i} \beta_j,\qquad \forall i .
\end{equation}
\label{prop_static_sufficient}
\end{proposition}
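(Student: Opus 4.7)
Because the proposition is a lower bound, it suffices to exhibit an achievable receive strategy and lower-bound its rate. The plan is to let each RX~$i$ use the genie-aided filter $\bG_i^\star$ (the one it would compute with perfect CSI), and then to track how the precoder perturbations $\bm{\Delta}\bU_j^{(j)}$ propagate into residual interference.

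First I would insert $\bU_j = \bU_j^{(j)} = \bU_j^\star + \bm{\Delta}\bU_j^{(j)}$ into the received signal \eqref{eq:SM_2} and apply $\bG_i^{\star\,\He}$. By the IA condition $(\bG_i^\star)^\He \bH_{i,j}\bU_j^\star = \bm{0}$ for $j\neq i$, every ``nominal'' interference term cancels, leaving
\begin{equation*}
\bG_i^{\star\,\He}\bm{y}_i \;=\; \sqrt{P}\,\bG_i^{\star\,\He}\bH_{i,i}\bU_i^{(i)}\bm{s}_i \;+\; \sqrt{P}\sum_{j\neq i}\bG_i^{\star\,\He}\bH_{i,j}\bm{\Delta}\bU_j^{(j)}\bm{s}_j \;+\; \bG_i^{\star\,\He}\bm{\eta}_i.
\end{equation*}
The residual-interference covariance is therefore upper-bounded in Frobenius norm by $P\sum_{j\neq i}\|\bG_i^\star\|_{\Fro}^2\|\bH_{i,j}\|_{\Fro}^2\|\bm{\Delta}\bU_j^{(j)}\|_{\Fro}^2$, whose expectation over~$\mathcal H$ is $\dotleq P\cdot P^{-\min_{j\neq i}\beta_j}$ by hypothesis and the fact that the channel moments are finite.

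Next I would lower-bound the achievable rate. Applying Jensen's inequality (concavity of $\log\det$) to swap the expectation with the $\log\det$ on the interference-plus-noise covariance, the effective SINR per stream scales as
\begin{equation*}
\mathrm{SINR}_i \;\dotgeq\; \frac{P}{1+P^{\,1-\min_{j\neq i}\beta_j}} \;\doteq\; P^{\,\min_{j\neq i}\beta_j},
\end{equation*}
provided the signal covariance $\bG_i^{\star\,\He}\bH_{i,i}\bU_i^{(i)}(\bU_i^{(i)})^\He\bH_{i,i}^\He\bG_i^\star$ remains of rank~$d_i$ with probability one at high SNR. This last point is where some care is needed: since $\bU_i^{(i)}\to\bU_i^\star$ in the mean-square sense as $P\to\infty$ and $(\bG_i^\star)^\He\bH_{i,i}\bU_i^\star$ has full rank $d_i$ almost surely (by the genericity of $\bH$), continuity plus a standard event-splitting argument on a high-probability set guarantees that $\log\det$ of the signal term contributes a full $d_i\log_2 P$.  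Combining the two pieces yields $R_i \geq d_i\min_{j\neq i}\beta_j\log_2 P + o(\log_2 P)$, and the definition \eqref{eq:SM_5} of DoF gives the claim.

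\textbf{Anticipated obstacle.} The delicate step is not the algebraic bound on the interference power but rather the justification of taking Jensen's inequality through $\log\det$ in the presence of the perturbations $\bm{\Delta}\bU_j^{(j)}$, which are correlated with $\bH$ via the quantization. I would handle this by isolating a high-probability channel event on which $\|\bH_{i,j}\|_{\Fro}$ is bounded (the complement contributes only $o(\log_2 P)$ to the rate), conditioning on this event, and then bounding the conditional second moment of the residual interference by a constant times $P^{-\min_{j\neq i}\beta_j}$; once the interference is almost-surely $O(P^{1-\min_{j\neq i}\beta_j})$ on this event, the SINR bound and hence the DoF bound follow routinely.
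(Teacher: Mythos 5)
Your proposal is correct and follows essentially the same route as the paper: fix the genie-aided filters $\bG_i^{\star}$, write $\bU_j^{(j)}=\bU_j^{\star}+\bm{\Delta}\bU_j^{(j)}$ so the IA condition kills the nominal interference, bound the residual interference by Frobenius norms, and pass the expectation through $\log\det$ via Jensen. The only cosmetic difference is how the product $\|\bH_{i,j}\|_{\Fro}^2\|\bm{\Delta}\bU_j^{(j)}\|_{\Fro}^2$ is decoupled before taking expectations --- the paper splits the logarithm multiplicatively so that the channel-norm term contributes only $O(1)$, whereas you condition on a high-probability bounded-channel event --- and both devices work.
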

\begin{proof}
Since we want to derive a lower bound for the DoF, we can choose $\bG_k=\bG_k^{\star},\forall k$. Following a classical derivation \cite{Jindal2006,Caire2010}, we can write
\begin{equation}
R_i\geq R_i^{\star}-\E_{\mathcal{H}}\LSB\log_2\left|\I_{d_i}+P\sum_{j=1,j\neq i}^K (\bG_i^\star)^{\He}\bH_{i,j}\bU^{(j)}_{j}(\bU^{(j)}_{j})^{\He}\bH_{i,j}^{\He}\bG_i^{\star}\right |\RSB
\label{eq:Static_DoF_1}
\end{equation}
where we have defined
\begin{equation}
R_i^{\star}\triangleq \E_{\mathcal{H}}\LSB\log_2\left|\I_{d_i}+P(\bG_i^{\star} )^{\He}\bH_{i,i}\bU^{(i)}_i(\bU^{(i)}_i)^{\He}\bH_{i,i}^{\He}(\bG_i^{\star})\right|\RSB.
\label{eq:Static_DoF_2}
\end{equation}
It is easily seen that~$R_i^{\star}\doteq d_i\log_2(P)$, such that it remains to study the second term of \eqref{eq:Static_DoF_1}, which we denote by $\mathcal{I}_i$. Since~$(\bG_i^{\star})^{\He}\bH_{i,j}\bU^{\star}_{j}=\mathbf{0}_{d_i \times d_{j}}$ for $i\neq j$, it holds that
\begin{equation}
\mathcal{I}_i=\E_{\mathcal{H}}\left[\log_2\left|\I_{d_i}+P \sum_{j=1,j\neq i}^K (\bG_i^\star)^{\He}\bH_{i,j}\bm{\Delta}\bU_{j}^{(j)}(\bm{\Delta}\bU_{j}^{(j)})^{\He}\bH_{i,j}^{\He}\bG_i^{\star} \right|\right].
\label{eq:Static_DoF_3}
\end{equation}

Since~$\|\bG_i^{\star}\|^2_{\Fro}=1$, we can upper bound the interference to write
\begin{equation}
\begin{aligned}
\mathcal{I}_i&\leq\E_{\mathcal{H}}\LSB \log_2\left|\I_{d_i}+\LB P\sum_{j=1,j\neq i}^K\|\bH_{i,j}\|_{\Fro}^2\|\bm{\Delta}\bU_{j}^{(j)}\|_{\Fro}^2\RB\I_{d_i} \right |\RSB\\
&\stackrel{(a)}{\leq} d_i\LB\E_{\mathcal{H}}\LSB\log_2\left(1+P\sum_{j=1,j\neq i}^K\|\bH_{i,j}\|_{\Fro}^2\right)\RSB+\E_{\mathcal{H}}\LSB\log_2\left(1+P\sum_{j=1,i\neq j}^K \|\bm{\Delta}\bU_{j}^{(j)}\|_{\Fro}^2\right)\RSB\RB\\
&\stackrel{(b)}{\leq} d_i\LB\E_{\mathcal{H}}\LSB\log_2\left(1+P\sum_{j=1,j\neq i}^K\|\bH_{i,j}\|_{\Fro}^2\right)\RSB+\log_2\left(1+P\sum_{j=1,j\neq i}^K\E_{\mathcal{H}}\LSB\|\bm{\Delta}\bU_{j}^{(j)}\|_{\Fro}^2\RSB\right)\RB
\label{eq:Static_DoF_4}
\end{aligned}
\end{equation}
where inequality~$(a)$ can be seen to hold since only positive terms have been added and we have used Jensen's inequality to obtain inequality~$(b)$. Using that $\E_{\mathcal{H}}[\|\bm{\Delta}\bU_{j}^{(j)}\|_{\Fro}^2]\doteq P^{-\beta_{j}}$ , we can write that
\begin{equation}
\sum_{j=1,j\neq i}^K\E_{\mathcal{H}}\LSB \|\bm{\Delta}\bU_{j}^{(j)}\|_{\Fro}^2 \RSB\doteq P^{-\min_{j\neq i} \beta_{j}}.
\label{eq:Static_DoF_5}
\end{equation}
Inserting \eqref{eq:Static_DoF_5} inside \eqref{eq:Static_DoF_4} and \eqref{eq:Static_DoF_1} gives
\begin{align}
R_i&\dotgeq d_i \LB\log_2(P)-\log_2(1+P P^{-\min_{j\neq i} \beta_{j}})\RB\\
&\dotgeq d_i(\min_{j\neq i} \beta_{j}) \log_2(P),
\label{eq:Static_DoF_6}
\end{align}
which concludes the proof.
\end{proof}
Proposition~\ref{prop_static_sufficient} provides some insights into the performance by relating the accuracy with which the precoder is computed to the achieved DoF. However, the accuracy of the precoder design is difficult to relate to the accuracy of the CSIT. Indeed, the relation is dependent on the precoding method used and some precoding schemes might be more or less robust to imperfections in the CSIT. Obtaining the relation between the CSIT quality and the accuracy especially difficult to study the performance of iterative IA algorithms.

In contrast to the conventional centralized CSI configuration studied in \cite{Thukral2009,Krishnamachari2010,Rezaee2013a,Rezaee2013b}, it is not possible to study solely the IA alignment obtained at the end of the precoding scheme. Indeed, the precoders~$\bU_j^{(j)},\forall j$ do not form (a priori) together an alignment solution for any of the multi-user channel estimates available at the TXs. Hence, the structure of the IA algorithm has to be studied to observe what is the impact of the CSIT imperfection over the precoding at each TX.
\subsection{DoF analysis in the $3$-user square MIMO IC}\label{se:ST:3}
\paragraph{Perfect CSIT Solution}
We consider now a $3$-user IC with~$M_i=M,N_i=N,\forall i$ and $d_i=d,\forall i$. We also assume for the description of the IA scheme that perfect CSIT is available such that we denote the precoder used at TX~$j$ by~$\bU_j^{\star}$. Since we consider the tightly-feasible case \cite{dekerret2012_ISWCS_journal}, we have~$M=N=2d$. In that case, the IA constraints can be written as \cite{Bresler2011a}
\begin{equation}
\begin{aligned}
\mathrm{span}\left(\tilde{\mathbf{H}}_{3,1}\bU_1^{\star}\right)&=\mathrm{span}\left(\tilde{\mathbf{H}}_{3,2}\bU_2^{\star}\right),\\
\mathrm{span}\left(\tilde{\mathbf{H}}_{1,2}\bU_2^{\star}\right)&=\mathrm{span}\left(\tilde{\mathbf{H}}_{1,3}\bU_3^{\star}\right),\\
\mathrm{span}\left(\tilde{\mathbf{H}}_{2,3}\bU_3^{\star}\right)&=\mathrm{span}\left(\tilde{\mathbf{H}}_{2,1}\bU_1^{\star}\right).
\label{eq:3users_1}
\end{aligned}
\end{equation}
In particular, this system of equations can be easily seen to be fulfilled if the precoders verify
\begin{equation}
\begin{aligned}
\bU_1^{\star} \bm{\Lambda}_1&=\tilde{\mathbf{H}}_{3,1}^{-1}\tilde{\mathbf{H}}_{3,2}\tilde{\mathbf{H}}_{1,2}^{-1}\tilde{\mathbf{H}}_{1,3}\tilde{\mathbf{H}}_{2,3}^{-1}\tilde{\mathbf{H}}_{2,1}\bU_1^{\star}\\
\bU_3^{\star}&=(\tilde{\mathbf{H}}_{2,3})^{-1} \tilde{\mathbf{H}}_{2,1}\bU_1^{\star} \\
\bU_2^{\star}&=(\tilde{\mathbf{H}}_{1,2})^{-1} \tilde{\mathbf{H}}_{1,3}\bU_3^{\star}
\label{eq:3users_2}
\end{aligned}
\end{equation}
for some diagonal matrix $\bm{\Lambda}_1$. We also define for clarity the matrix~$\bY^{\star}$ equal to
\begin{equation}
\bY^{\star}\triangleq\tilde{\mathbf{H}}_{3,1}^{-1}\tilde{\mathbf{H}}_{3,2}\tilde{\mathbf{H}}_{1,2}^{-1}\tilde{\mathbf{H}}_{1,3}\tilde{\mathbf{H}}_{2,3}^{-1}\tilde{\mathbf{H}}_{2,1}.
\label{eq:3users_3}
\end{equation}
The system of equations \eqref{eq:3users_2} is then fulfilled by setting
\begin{equation}
\begin{aligned}
\bU^{\star}_1 &=\frac{1}{\sqrt{d}}\EVD(\bY^{\star})\begin{bmatrix} \bm{e}_1,\ldots, \bm{e}_d\end{bmatrix}\\
\bU^{\star}_3&=\frac{1}{\|(\tilde{\mathbf{H}}_{2,3})^{-1} \tilde{\mathbf{H}}_{2,1}\bU_1^{\star}\|_{\Fro}} (\tilde{\mathbf{H}}_{2,3})^{-1} \tilde{\mathbf{H}}_{2,1}\bU_1^{\star} \\
\bU^{\star}_2&=\frac{1}{\| (\tilde{\mathbf{H}}_{1,2})^{-1} \tilde{\mathbf{H}}_{1,3}\bU_3^{\star} \|_{\Fro}} (\tilde{\mathbf{H}}_{1,2})^{-1} \tilde{\mathbf{H}}_{1,3}\bU_3^{\star}.
\label{eq:3users_4}
\end{aligned}
\end{equation} 

\paragraph{Distributed CSIT Solution}

With distributed CSIT, TX~$j$ computes using its channel estimate~$\tilde{\bH}^{(j)}$ the matrix
\begin{equation}
\bY^{(j)}=(\tilde{\bH}_{3,1}^{(j)})^{-1}(\tilde{\bH}_{3,2}^{(j)})(\tilde{\bH}_{1,2}^{(j)})^{-1}(\tilde{\bH}_{1,3}^{(j)})(\tilde{\bH}_{2,3}^{(j)})^{-1}(\tilde{\bH}_{2,1}^{(j)}).
\label{eq:3users_4}
\end{equation}  
The precoding matrices are then obtained from
\begin{equation}
\begin{aligned}
\bU^{(j)}_1 &=\frac{1}{\sqrt{d}}\EVD(\bY^{(j)})\begin{bmatrix} \bm{e}_1,\ldots, \bm{e}_d\end{bmatrix}\\
\bU^{(j)}_3&=\frac{1}{\|(\tilde{\mathbf{H}}^{(j)}_{2,3})^{-1} \tilde{\mathbf{H}}^{(j)}_{2,1}\bU^{(j)}_1\|_{\Fro}} (\tilde{\mathbf{H}}^{(j)}_{2,3})^{-1} \tilde{\mathbf{H}}^{(j)}_{2,1}\bU_1^{(j)} \\
\bU^{(j)}_2&=\frac{1}{\| (\tilde{\mathbf{H}}^{(j)}_{1,2})^{-1} \tilde{\mathbf{H}}^{(j)}_{1,3}\bU^{(j)}_3 \|_{\Fro}} (\tilde{\mathbf{H}}^{(j)}_{1,2})^{-1} \tilde{\mathbf{H}}^{(j)}_{1,3}\bU_3^{(j)}.
\label{eq:3users_5}
\end{aligned}
\end{equation}

In that case, we can give the following result on the DoF achieved.
\begin{theorem}
Using the $3$-User IA scheme described above with distributed CSIT, the DoF achieved at user~$i$ is denoted by~$\DoF^{\DCSI}_i$ and verifies
\begin{equation}
\DoF^{\DCSI}_i \geq d\min_{j\neq i} \min_{k,\ell,k\neq \ell} A_{k,\ell}^{(j)}.
\label{eq:3users_6}
\end{equation}
\label{thm_3Users}
\end{theorem}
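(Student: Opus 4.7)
The plan is to reduce Theorem~\ref{thm_3Users} to Proposition~\ref{prop_static_sufficient} by quantifying, for each TX~$j$, the perturbation of its \emph{own} computed precoder relative to the perfect-CSIT solution. Concretely, I will show
\begin{equation}
\E_{\mathcal{H}}\LSB\|\bm{\Delta}\bU_j^{(j)}\|_{\Fro}^2\RSB \dotleq P^{-\beta_j},\qquad \beta_j \triangleq \min_{k,\ell:\, k\neq\ell} A_{k,\ell}^{(j)},
\end{equation}
for every $j\in\{1,2,3\}$. Once this is in hand, Proposition~\ref{prop_static_sufficient} immediately delivers $\DoF_i^{\DCSI}\geq d\min_{j\neq i}\beta_j$, which is exactly \eqref{eq:3users_6}. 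The proof therefore reduces to a perturbation analysis of the closed-form precoder construction \eqref{eq:3users_5}.

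The first step is to translate the CSIT scaling coefficients into entry-wise channel-estimate error scalings. Combining Proposition~\ref{App_distorsion} with the definition \eqref{eq:ST_3} one obtains $(\sigma_{i,k}^{(j)})^2 \doteq P^{-A_{i,k}^{(j)}}$, so that the additive perturbation $\sigma_{i,k}^{(j)}\bN_{i,k}^{(j)}$ appearing in \eqref{eq:SM_7} has squared Frobenius norm of the same order. Since $\bY^{(j)}$ is a rational function of the six interfering channel estimates $\{\tilde{\bH}_{k,\ell}^{(j)}\}_{k\neq\ell}$, and since $\tilde{\bH}$ is almost surely such that $\tilde{\bH}_{3,1},\tilde{\bH}_{1,2},\tilde{\bH}_{2,3}$ are invertible with bounded inverses, a first-order Taylor expansion around $\tilde{\bH}$ yields $\|\bY^{(j)}-\bY^{\star}\|_{\Fro}^2 \dotleq P^{-\beta_j}$, since the exponent is driven by the worst of the six estimates seen by TX~$j$.

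The more delicate step is to propagate this matrix-level bound to the $d$-dimensional invariant subspace selected by $\EVD(\bY^{(j)})$. Because the entries of $\tilde{\bH}$ are drawn from a continuous distribution, $\bY^{\star}$ almost surely has $2d$ distinct eigenvalues and a strictly positive minimum eigengap; I would therefore invoke a Davis--Kahan-type $\sin\Theta$ result for diagonalizable (and possibly non-normal) matrices to conclude that the eigenspace chosen in \eqref{eq:3users_5} differs from the perfect-CSIT one by a principal-angle distance of order $\|\bY^{(j)}-\bY^{\star}\|_{\Fro}$, up to multiplicative factors (condition number of the eigenbasis of $\bY^{\star}$ and inverse minimum gap) that are finite almost surely. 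Choosing the canonical orthonormal representatives then gives $\|\bU_1^{(j)}-\bU_1^{\star}\|_{\Fro}^2 \dotleq P^{-\beta_j}$. Since $\bU_2^{(j)}$ and $\bU_3^{(j)}$ depend on $\bU_1^{(j)}$ and on the remaining channel estimates only through the explicit rational maps in \eqref{eq:3users_5}, a further Taylor expansion and the triangle inequality propagate the same exponent to $\|\bU_2^{(j)}-\bU_2^{\star}\|_{\Fro}^2$ and $\|\bU_3^{(j)}-\bU_3^{\star}\|_{\Fro}^2$.

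The main obstacle will be upgrading this almost-sure perturbation bound to one in \emph{expectation}, since the multiplicative constants above (inverse minimum eigengap of $\bY^{\star}$, norms of the inverse channel sub-matrices) are heavy-tailed and a naive integration could wipe out the exponent $\beta_j$. My intended remedy is to partition $\mathcal{H}$ into a well-conditioned subset $\mathcal{H}_P$ on which all these constants are polynomially bounded in $P$, and its complement. The conditioning threshold can be chosen to shrink slowly enough in $P$ that, on the one hand, $\Pr(\mathcal{H}\setminus\mathcal{H}_P)$ decays polynomially and its contribution to the interference term $\mathcal{I}_i$ is killed by the crude $\log_2(1+P\,\cdot\,)$ bound already used in \eqref{eq:Static_DoF_4}, while on the other hand the perturbation constants on $\mathcal{H}_P$ remain sub-polynomial in $P$ and preserve the exponent $\beta_j$. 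Feeding the resulting expected-error bound into Proposition~\ref{prop_static_sufficient} then completes the argument.
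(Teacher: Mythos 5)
Your proposal follows essentially the same route as the paper's proof: a perturbation analysis of the closed-form construction (a resolvent/Taylor expansion of the channel inverses to show $\E[\|\bY^{(j)}-\bY^{\star}\|_{\Fro}^2]\dotleq \max_{k\neq\ell}(\sigma_{k,\ell}^{(j)})^2\doteq P^{-\min_{k\neq\ell}A_{k,\ell}^{(j)}}$, then an eigenvector-perturbation bound under an eigengap condition to transfer this to $\bm{\Delta}\bU_1^{(j)}$ and, through the remaining rational maps in \eqref{eq:3users_5}, to $\bU_2^{(j)}$ and $\bU_3^{(j)}$), combined with a restriction to well-conditioned channels to keep the expectations finite; the Davis--Kahan-type bound you invoke plays exactly the role of the eigenbasis-perturbation theorem the paper cites. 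The one point where you genuinely deviate is the truncation, and it is the only place your version is on shakier ground. The paper fixes $\varepsilon>0$ \emph{independently of $P$}, defines $\mathcal{H}^{\varepsilon}$ by $\lambda_{\min}(\tilde{\bH}_{i,k})\geq\varepsilon$ and an eigengap of $\bY^{\star}$ at least $\varepsilon$, keeps every expectation restricted to $\mathcal{H}^{\varepsilon}$ throughout (so it never needs, and does not prove, the unconditional bound $\E_{\mathcal{H}}[\|\bm{\Delta}\bU_j^{(j)}\|_{\Fro}^2]\dotleq P^{-\beta_j}$ required to invoke Proposition~\ref{prop_static_sufficient} verbatim), accepts a $(1-\eta)$ factor on the useful-signal term from $\Pr(\mathcal{H}^{\varepsilon})\geq 1-\eta$, and sends $\eta\to 0$ only at the very end. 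Your $P$-dependent threshold $\mathcal{H}_P$ aims at the stronger unconditional expectation bound so that Proposition~\ref{prop_static_sufficient} applies directly; this is structurally cleaner, but it requires a \emph{quantitative} anti-concentration estimate, namely that $\Pr(\lambda_{\min}(\tilde{\bH}_{i,k})<t)$ and, more delicately, the probability that the eigengap of the six-fold product $\bY^{\star}$ falls below $t$ both decay polynomially in $t$. The paper only assumes the channel law is continuous, which gives $\Pr(\mathcal{H}^{\varepsilon})\to 1$ qualitatively but no rate, and a polynomial tail for the eigengap of $\bY^{\star}$ is not immediate. If you cannot supply that estimate, fall back on the fixed-$\varepsilon$ device: restrict all expectations to $\mathcal{H}^{\varepsilon}$, rerun the argument of Proposition~\ref{prop_static_sufficient} on that event, and take $\eta\to 0$ last; with that adjustment your argument coincides with the paper's.
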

\begin{proof} 
The main idea of the proof is to consider only the rate achieved over the channel realizations which are ``well enough" conditioned. Over these channel realizations, the precoding is robust enough to the errors in the CSIT. Due to the continuous distribution of the channel matrices, the probability of the ``badly conditioned" channel realizations is small enough such that the loss due to removing these channel realizations can be made arbitrarily small.

We consider hereafter that $\forall k,\ell,j, A_{k,\ell}^{(j)}>0$ since the result is otherwise trivial. We also consider without loss of generality the precoding at TX~$j$. For a given~$\varepsilon>0$, we define the following channel subsets:
\begin{align}
\mathcal{X}^{\varepsilon}&\triangleq \{\tilde{\mathbf{H}}| \forall i,k,~\lambda_{\min}(\tilde{\mathbf{H}}_{i,k})\geq \varepsilon  \}\\ 
\mathcal{Y}^{\varepsilon}&\triangleq \{\tilde{\mathbf{H}}| \forall i\neq j,~|\lambda_{i}(\bY^\star)-\lambda_{j}(\bY^\star)|\geq \varepsilon  \}
\label{eq:3users_7}
\end{align} 
and 
\begin{equation}
\mathcal{H}^{\varepsilon}\triangleq \mathcal{X}^{\varepsilon} \bigcap \mathcal{Y}^{\varepsilon}.
\end{equation}
Since we aim at deriving a lower bound for the DoF (and the rate is nonnegative), we can consider only the rate achieved for the channel realizations belonging to $\mathcal{H}^{\varepsilon}$. From \eqref{eq:Static_DoF_1}, we can then write
\begin{align}
R_i&\geq \E_{\mathcal{H}^{\varepsilon}}\LSB\log_2\left|\I_{d_i}+P(\bG_i^{\star} )^{\He}\bH_{i,i}\bU^{(i)}_i(\bU^{(i)}_i)^{\He}\bH_{i,i}^{\He}(\bG_i^{\star})\right|\RSB\notag\\
&\qquad-\E_{\mathcal{H}^{\varepsilon}}\bigg[\log_2\left|\I_{d_i}+P\sum_{j=1,j\neq i}^K(\bG_i^\star)^{\He}\bH_{i,j}\bU^{(j)}_{j}(\bU_{j}^{(j)})^{\He}\bH_{i,j}^{\He}\bG_i^{\star} \right|\bigg].
\label{eq:3users_8}
\end{align} 
It can be easily seen from the continuous distribution of the channel matrices that $\forall \eta>0,\exists \varepsilon>0,\Pr(\mathcal{H}^{\varepsilon})\geq 1-\eta$. Hence, it follows that
\begin{equation}
\E_{\mathcal{H}^{\varepsilon}}\LSB \log_2\left|\I_{d_i}+P(\bG_i^{\star} )^{\He}\bH_{i,i}\bU_i^{(i)}(\bU_i^{(i)})^{\He}\bH_{i,i}^{\He}(\bG_i^{\star})\right|\RSB\dotgeq (1-\eta) d_i \log_2(P) .
\label{eq:3users_9}
\end{equation} 
We now need to upper bound the second term of \eqref{eq:3users_8} which we denote by $\mathcal{J}_i^{\varepsilon}$. We can then proceed similarly to \eqref{eq:Static_DoF_4} to write
\begin{align}
\mathcal{J}_i^{\varepsilon}&\leq  d_i\LB\E_{\mathcal{H}^{\varepsilon}}\LSB\log_2\left(1+\sum_{j=1,j\neq i}^K\|\bH_{i,j}\|_{\Fro}^2\right)\RSB+\log_2\left(1+P\sum_{j=1,j\neq i}^K\E_{\mathcal{H}^{\varepsilon}}\LSB\|\bm{\Delta}\bU_{j}^{(j)}\|_{\Fro}^2\RSB\right) \RB\\
&\dotleq d_i\LB \log_2\LB 1+P\sum_{j=1,j\neq i}^K\E_{\mathcal{H}^{\varepsilon}}\LSB \|\bm{\Delta}\bU_{j}^{(j)}\|_{\Fro}^2\RSB\RB\RB.
\label{eq:3users_10}
\end{align}  
It remains then only to compute $\E_{\mathcal{H}^{\varepsilon}}[\|\bm{\Delta}\bU_{j}^{(j)}\|_{\Fro}^2]$. Le us now consider the error due to the imperfect CSIT at TX~$j$ on one of the matrix inversion required to compute~$\bY^{(j)}$ in \eqref{eq:3users_4}. We start by introducing $\bm{\Delta}_{i,k}^{(j)}$ to represent the error done in computing the channel inverse:
\begin{equation}
\bm{\Delta}_{i,k}^{(j)}\triangleq\frac{1}{\sigma_{i,k}^{(j)}}\LB\left(\tilde{\bH}_{i,k}^{(j)}\right)^{-1}-\tilde{\mathbf{H}}_{i,k}^{-1}\RB,\qquad \forall i,k.
\label{eq:3users_11}
\end{equation}
Using the resolvent equality \cite[Lemma~$6.1$]{Couillet2011}, we can write
\begin{equation}
\bm{\Delta}_{i,k}^{(j)}=- \tilde{\mathbf{H}}_{i,k}^{-1}\mathbf{N}_{i,k}^{(j)}\tilde{\mathbf{H}}_{i,k}^{-1}+\sigma^{(j)}_{i,k} \bm{\Theta}_{i,k}^{(j)},\qquad \forall i,k,
\label{eq:3users_12}
\end{equation}
where we have defined
\begin{equation}
\bm{\Theta}_{i,k}^{(j)}\triangleq (\tilde{\mathbf{H}}_{i,k}^{(j)})^{-1}\mathbf{N}_{i,k}^{(j)}\tilde{\mathbf{H}}_{i,k}^{-1}\mathbf{N}_{i,k}^{(j)}\tilde{\mathbf{H}}_{i,k}^{-1},\qquad \forall i,k.
\label{eq:3users_13}
\end{equation}
We can then use the properties of the Frobenius norm to obtain the upper bound
\begin{align}
&\|\bm{\Delta}_{i,k}^{(j)}\|_{\Fro}\leq  \|\mathbf{N}_{i,k}^{(j)}\|_{\Fro}  \|\tilde{\mathbf{H}}_{i,k}^{-1}\|^2_{\Fro}+  \sigma_{i,k}^{(j)} \|\bm{\Theta}_{i,k}^{(j)}\|_{\Fro}.
\label{eq:3users_14}
\end{align}
Taking the expectation, we have then
\begin{equation}
\E_{\mathcal{H}^{\varepsilon}}[\|\bm{\Delta}_{i,k}^{(j)}\|^2_{\Fro}]\leq  \E_{\mathcal{H}^{\varepsilon}}\LSB \LB\|\mathbf{N}_{i,k}^{(j)}\|_{\Fro}  \|\tilde{\mathbf{H}}_{i,k}^{-1}\|^2_{\Fro}+ \sigma_{i,k}^{(j)} \|\bm{\Theta}_{i,k}^{(j)}\|_{\Fro}\RB^2\RSB
\label{eq:3users_15}
\end{equation}
The expectation in \eqref{eq:3users_15} exists and is finite because $\bH\in \mathcal{H}^{\varepsilon}$ such that the channel matrix~$\tilde{\mathbf{H}}_{i,k}$ (and~$\tilde{\mathbf{H}}^{(j)}_{i,k}$) has its eigenvalues bounded away from zero. We have therefore obtained
\begin{equation}
\E_{\mathcal{H}^{\varepsilon}} [\|\bm{\Delta}_{i,k}^{(j)}\|_{\Fro}^2]\dotleq 1.
\label{eq:3users_16}
\end{equation} 
By repeating the same calculation for every matrix inversion in \eqref{eq:3users_7}, we can write
\begin{align}
&\bY^{(j)}=(\tilde{\mathbf{H}}_{3,1}^{-1}+ \sigma_{3,1}^{(j)}\bm{\Delta}_{3,1}^{(j)})(\tilde{\mathbf{H}}_{3,2}+\sigma_{3,2}^{(j)}\bN^{(j)}_{3,2})(\tilde{\mathbf{H}}_{1,2}^{-1}+\sigma_{1,2}^{(j)}\bm{\Delta}^{(j)}_{1,2})\notag\\
&\qquad\qquad\qquad(\tilde{\mathbf{H}}_{1,3}+\sigma_{1,3}^{(j)}\mathbf{N}_{1,3}^{(j)})(\tilde{\mathbf{H}}_{2,3}^{-1}+\sigma_{2,3}^{(j)}\bm{\Delta}_{2,3}^{(j)})(\tilde{\mathbf{H}}_{2,1}+\sigma_{2,1}^{(j)}\bN^{(j)}_{2,1}).
\label{eq:3users_17}
\end{align} 
The relation \eqref{eq:3users_16} holds for every matrix inversion in \eqref{eq:3users_9} such that putting all the errors terms together, we can write from \eqref{eq:3users_17} that
\begin{equation}
\E_{\mathcal{H}^{\varepsilon}}[\|\bY^{(j)}-\bY^\star\|_{\Fro}^2]\dotleq  \max_{\ell\neq k} (\sigma_{\ell k}^{(j)})^2.
\label{eq:3users_18}
\end{equation}
Since $\bH\in \mathcal{H}^{\varepsilon}$, all the eigenvalues of~$\bY^\star$ (and $\bY^{(j)}$) are different and the matrices~$\bY^{\star}$ and $\bY^{(j)}$ are diagonalizable. Let $\bY^\star=\bV^\star\bm{\Lambda}(\bV^\star)^{\He}$, with $\bV^\star \in \mathbb{C}^{M \times M}$ and~$\bm{\Lambda}^\star\in \mathbb{C}^{M\times M}$, and $\bY^{(j)}=\bV^{(j)}\bm{\Lambda}^{(j)}(\bV^{(j)})^{\He}$, with $\bV^{(j)} \in \mathbb{C}^{M \times M}$ and~$\bm{\Lambda}^{(j)}\in \mathbb{C}^{M\times M}$, be the spectral decomposition of~$\bY^\star$ and~$\bY^{(j)}$, respectively. Applying Theorem~$2.1$ from \cite{Chen2012} to $\bY^\star$ and $\bY^{(j)}$ and taking the expectation we can show that for some constant~$\gamma^{(j)}>0$ independant of the SNR~$P$,
\begin{align}
\E_{\mathcal{H}^{\varepsilon}}[\|\bV^{(j)}-\bV^\star\|_{\Fro}^2]&\leq \gamma^{(j)} \E_{\mathcal{H}^{\varepsilon}}[\|\bY^{(j)}-\bY^\star\|_{\Fro}^2]\\
&\dotleq \max_{\ell\neq k} (\sigma_{\ell ,k}^{(j)})^2\label{eq:3users_7_1}\\
&\doteq P^ {-\min_{\ell\neq k} A_{\ell, k}^{(j)}}.
\label{eq:3users_19}
\end{align}
Let us denote by $\bar{\bV}^{(j)}$ and $\bar{\bV}^\star$ the matrices made of the first $d$ columns of $\bV^{(j)}$ and $\bV^\star$, respectively. 
The precoding scheme is such that~$\bU_1^{\star}=\bar{\bV}^{\star}$ and ~$\bU_1^{(1)}=\bar{\bV}^{(1)}$. Hence,
\begin{align}
\E_{\mathcal{H}^{\varepsilon}}[\| \bm{\Delta}\bU_1^{(1)}\|_{\Fro}^2] \dotleq  P^ {-\min_{\ell\neq k} A_{\ell,k}^{(1)}}.
\label{eq:3users_20}
\end{align} 
The relation~\eqref{eq:3users_20} is easily extended to the other precoders~$\bU^{(2)}_2$ and $\bU^{(3)}_3$ to obtain that
\begin{align}
\sum_{j=1,j\neq i}^3\E_{\mathcal{H}^{\varepsilon}}[\|\bm{\Delta}\bU_{j}^{(j)}\|_{\Fro}^2] &\dotleq \sum_{j=1,j\neq i}^3 P^ {-\min_{\ell\neq k} A_{\ell,k}^{(j)}}\\
&\dotleq P^ {-\min_{j\neq i}\min_{\ell\neq k} A_{\ell,k}^{(j)}}.
\label{eq:3users_21}
\end{align}
Coming back to \eqref{eq:3users_8}, this gives
\begin{align}
R_i&\dotgeq  d_i\LB (1-\eta)\log_2(P)-\log_2(1+P P^ {-\min_{j\neq i}\min_{\ell\neq k} A_{\ell,k}^{(j)}} \RB\\
&\dotgeq  d_i\LB\min_{j\neq i}\min_{\ell\neq k} A_{\ell,k}^{(j)} - \eta\RB \log_2(P).
\label{eq:3users_22}
\end{align} 
Choosing $\eta$ arbitrarily small concludes the proof.
\end{proof}

We have shown that for the $3$-user IA closed-form alignment scheme, the achieved DoF is larger than the worst accuracy of the channel estimates across the TXs. Note that this lower bound is in fact conjectured to be tight. 

Interestingly, the lower bound at RX~$j$ is limited by the accuracy of the estimates relative to the channels of all the \emph{other} RXs. This result is in strong contrast with the centralized setting where the DoF of user~$i$ depends \emph{solely} on the accuracy with which the channel matrices from the TXs to RX~$i$ are fedback. This show how IA becomes more sensitive to CSIT errors when the precoding is done based on distributed CSIT. Note that this result is reminiscent of \cite{dekerret2011_ISIT_journal} where it was shown in a $K$-user MISO BC with single-antenna RXs and with distributed CSIT, that the DoF was limited by the worst accuracy across the TXs and across the channel vectors.  

\section{Simulations}\label{se:Simulations}
In this section, we validate by Monte-Carlo simulations the results in the $3$-user square IC channel studied in Subsection~\ref{se:ST:3}. We consider $M= N=4$ and $d=2$ and we average the performance over $10000$ realizations of a Rayleigh fading channel. We consider the distributed CSIT configuration described in Section~\ref{se:SM}. The quantization error is modeled using \eqref{eq:SM_7} with $(\sigma_{i,k}^{(j)})^2= 2^{-B_{i,k}^{(j)}/(N_iM_j-1)}$ and $\bN_{i,k}^{(j)}$ having its elements i.i.d.~$\CN(0,1)$. We choose the CSIT scaling coefficients as
\begin{equation}
\begin{aligned}
&\forall (i,k,j)\in \{1,2,3\}^{3}\setminus \{(3,2,2),(3,2,3)\},~A_{i,k}^{(j)}=1, A_{3,2}^{(2)}=0.5,~A_{3,2}^{(3)}=0.
\label{eq:sim_1}
\end{aligned}
\end{equation}
Following Theorem~\ref{thm_3Users}, we have for the CSIT configuration described in \eqref{eq:sim_1} that $\DoF_1\geq 0$, $\DoF_2\geq 0$, and~ $\DoF_3\geq 0.5 d=1$. The average rate achieved is shown for each user in Fig.~\ref{R_3Users_Ex1}. For comparison, we have also simulated the average rate per-user achieved based on perfect CSIT and with distributed CSIT when the CSIT scaling coefficients are set equal to $1$ for every TX ($\forall i,k,j, A_{i,k}^{(j)}=1$). It can then be verified that having all CSIT scaling coefficients equal to one allows to achieve the maximal DoF. 

With the CSIT configuration described in \eqref{eq:sim_1}, the slope of the rate of user~$3$ decreases as the SNR increases, revealing a very slow convergence to the DoF. This makes it difficult to accurately observe the DoF achieved. Yet, it can be seen that having only~$A_{3,2}^{(3)}$ equal to zero leads already to the saturation of the rates of users~$1$ and $2$ (i.e., their DoF is equal to $0$), which tends to confirm our conjecture.
\begin{figure}
\centering
\includegraphics[width=1\columnwidth]{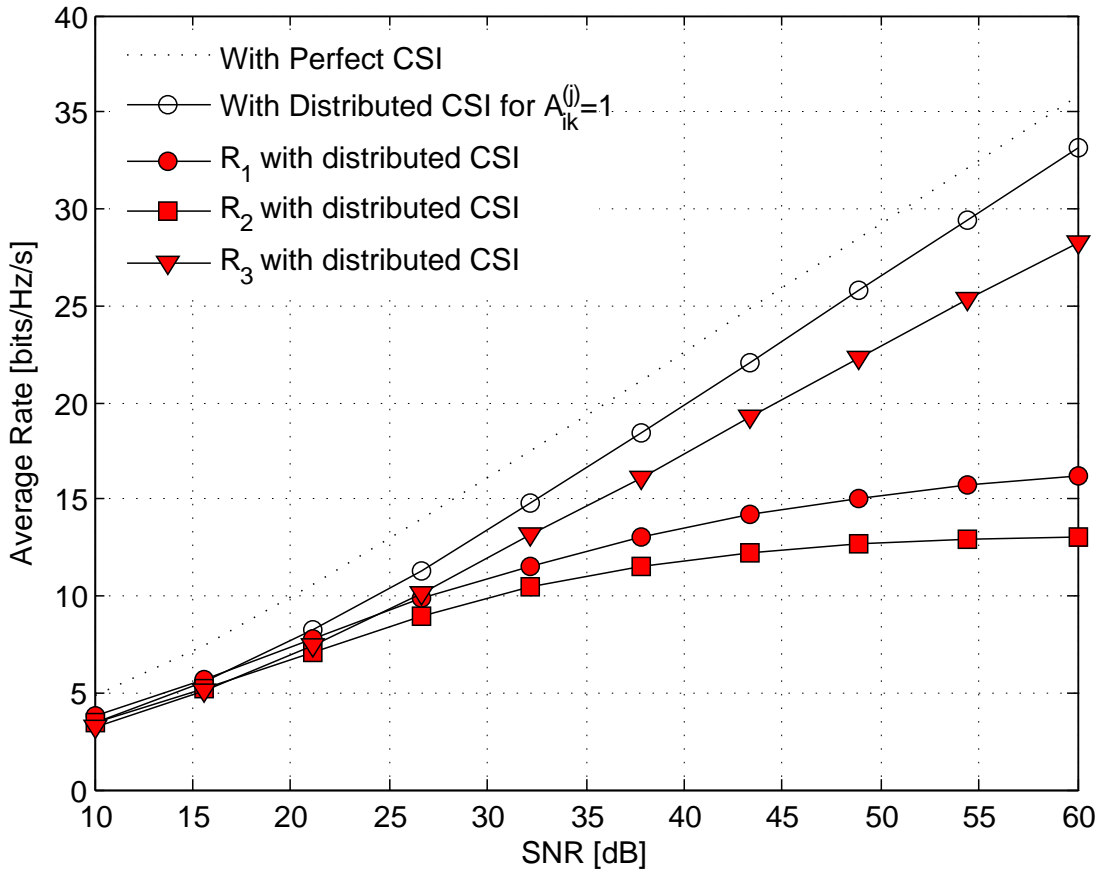}
\caption{Average rate per user in the square setting $M=N=4$ with $d=2$ for the CSIT scaling coefficients given in \eqref{eq:sim_1}.}
\label{R_3Users_Ex1}
\end{figure}

\section{Extension to Time-Alignment and Iterative Interference Alignement}
We have studied the DoF in a particular antenna configuration for the case of static MIMO channels. This antenna configuration has been considered both because it is believed to be a simple, yet practically relevant configuration, and because the knowledge of a closed-form precoding formula is necessary for our analysis. In fact, our approach is expected to easily extend to numerous scenarios where a closed form expression exists for the IA precoding, under the condition that the precoding scheme is ``robust" enough to the quantization errors, e.g., it consists of matrix inversions or matrix multiplications where the matrices have their elements distributed according to a continuous distribution. This in particular the case of the original time-alignment IA scheme from \cite{Cadambe2008,Gou2010}. Hence, our results can be trivially extended to this setting.

Obtaining the DoF achieved with an iterative IA algorithm like the min-leakage algorithm or the max-SINR algorithm \cite{Gomadam2008,Peters2011} is a challenging open problem which will be investigated in subsequent works. As a prerequisite step, it requires deriving some basic properties of the IA algorithm, such as convergence properties, which have remained out of reach until now. Furthermore, it has been shown in \cite{dekerret2012_ISWCS_journal} considering the different model of \emph{incomplete CSIT} that heterogeneous antenna configurations could be exploited to achieve IA even when some of the TXs do not have any CSIT. Such behaviour should be taken into account when analysing the feedback requirements and complicate further the analysis of iterative IA algorithms.
\bibliographystyle{IEEEtran}
\bibliography{Literatur}
\end{document}